\tikzstyle{block} = [draw,rectangle,thick,minimum height=2em,minimum width=2em]
\tikzstyle{sum} = [draw,circle,inner sep=0mm,minimum size=2mm]
\tikzstyle{connector} = [->,thick]
\tikzstyle{line} = [thick]
\tikzstyle{branch} = [circle,inner sep=0pt,minimum size=1mm,fill=black,draw=black]
\tikzstyle{guide} = []
\tikzstyle{snakeline} = [connector, decorate, decoration={pre length=0.2cm,
\newtheorem{assumption}{Assumption}
\newtheorem{theorem}{Theorem}
\newtheorem{lemma}{Lemma}
\newtheorem{problem}{Problem}
\title{\LARGE \bf
Guaranteed Performance Leader-follower Control for Multi-agent Systems with
Linear IQC-Constrained Coupling\thanks{This work was supported by the Australian Research Council under projects DP0987369 and DP120102152.}
}
\author{Yi Cheng and and Valery A. Ugrinovskii
\thanks{The work of V. Ugrinovskii was carried in part while he was a visitor at the Australian National University.}
\thanks{Yi Cheng and Valery A. Ugrinovskii are with the School of Engineering
  and Information Technology, The University of New South Wales at
  the Australian Defence Force Academy, Canberra, ACT 2600, Australia.
      Email: {\tt\small yi.cheng@student.adfa.edu.au, v.ougrinovski@adfa.edu.au}}%
}
\begin{document}

\maketitle
\thispagestyle{empty}
\pagestyle{empty}

\begin{abstract}

This paper considers the leader-follower control problem for a linear
multi-agent system with undirected topology and linear coupling
subject to integral quadratic constraints (IQCs). A consensus-type control
protocol is proposed based on each agent's states relative its
neighbors. In addition a selected set of agents uses for control their states
relative the leader. Using a coordinate transformation, the consensus
analysis of the multi-agent system is recast as a decentralized robust
control problem for an auxiliary interconnected large scale system. Based on
this interconnected large scale system, sufficient conditions are obtained
which guarantee that the system tracks the leader. These conditions
guarantee a suboptimal bound on the system tracking performance. The
effectiveness of the proposed method is demonstrated using a simulation
example.

\end{abstract}

\section{Introduction}

Theoretical study of distributed coordination and control for multi-agent
systems has received increasing attention in the past decade, due to its
broad applications in unmanned air vehicles (UAVs), formation control,
flocking, distributed sensor networks, etc \cite{[Olfati2004]}. As a result, much progress has been made in the study of cooperative control of multi-agent systems \cite{[Olfati2007]}, \cite{[Ren2007]}, \cite{[A. Arenas]}.

Efforts have recently been made to consider the leader-following consensus
problem. 
It was noted
in \cite{[Meng2011]} that the control problem becomes much more complex if
only a portion of the agents in the group has access to the leader. 
For example, the leader-following consensus problem for higher
order multi-agent systems is presented for both fixed and switching
topologies in \cite{[Ni2010]}. In \cite{[Hong2008]}, distributed observers
are designed for the system of second-order agents where an active leader
to be followed moves with an unknown velocity, and the interaction topology
has a switching nature. The consensus-based approach to observer-based
synchronization of multiagent systems to the leader has further been explored
in~\cite{U6,U8}.

The majority of leader following consensus-based control problems currently
considered in the literature are focused on systems of independent agents. 
In many physical systems, however, interactions between
agents are inevitable and must be taken into account. Examples of
systems with a dynamical interaction between subsystems include power
systems and spacecraft control systems \cite{[Siljak]}. While in
interacting systems decentralized control schemes can be used, in the
examples mentioned above only relative states are available for measurement
which poses an additional difficulty when using decentralized
control. Hence, consensus based control strategies using relative state
information have application in these problems; e.g., see \cite{[Chen2010]}.

In this paper, we are concerned with the leader-follower control problem
for multi-agent systems coupled via linear unmodelled dynamics. Compared
with the existing work in the field of the leader-follower consensus
problem, interactions between the agents are regarded as uncertainty and are
described in terms of time-domain integral quadratic constraints (IQCs)
\cite{[Ian2000]}. The IQC modelling is a well established technique to
describe uncertain interactions between subsystems in a large scale system
\cite{[Ugrinovskii2005]}, \cite{[Li2007]}, \cite{[Ugrinovskii2000]}.

The IQC modelling allows us to analyze the effects of interactions between
the agents from a robustness viewpoint. In this respect, the recent paper
\cite{[Trentelman2013]} is worth mentioning, which considers robust consensus
protocols for synchronization of multiagent systems under additive
uncertain perturbations with bounded $H_\infty$ norm. Since the IQC conditions
in our paper capture uncertain perturbations with bounded $L_2$ gain, we
note a similarity between the two uncertainty classes. However, thanks to
the time-domain IQC modelling, our paper goes beyond establishing a
robust consensus. It develops the optimization approach to the
leader-follower tracking problem, which provides a guarantee of performance
of the leader-follower system under consideration (note that
\cite{[Trentelman2013]} consider a leaderless network). Namely, we pose the
leader-follower tracking problem as a constrained optimization problem in
which we optimize the worst-case consensus tracking performance of the
system, as well as the cost associated with protocol actions. Our interest
in robust performance guarantees is inspired by recent results on the
distributed LQR design~\cite{[Borrelli2008],[Hongwei2011]}.


The main contribution of the paper is a sufficient condition for the
design of a guaranteed consensus tracking performance protocol for
multi-agent systems subject to uncertain linear coupling. To derive such a
condition, the underlying guaranteed performance leader following control
problem is transformed into a guaranteed cost decentralized robust control
for an auxiliary large scale system, which is comprised of coupled
subsystems. The fact that the subsystems remain
coupled after the transformation constitutes the main difference of our
approach, compared with, e.g.,  \cite{[Zhongkui2010],[Hongwei2011]}, where
similar transformations resulted in a set of completely decoupled stabilization
problems. Coupling between subsystems poses an additional difficulty
compared with \cite{[Zhongkui2010],[Hongwei2011]},  which stems from
coupling between the agents. It is overcome using the minimax control
design methodology of decentralized control synthesis
\cite{[Ugrinovskii2005],[Li2007],[Ugrinovskii2000]}.

The paper is organized as follows. Section \uppercase\expandafter{\romannumeral2} includes the problem formulation and some preliminaries. The main results are given in Section \uppercase\expandafter{\romannumeral3}. In section \uppercase\expandafter{\romannumeral4}, the illustrative example is presented. Finally, the conclusions are given in Section \uppercase\expandafter{\romannumeral5}.

\section{Problem Formulation and Preliminaries}
\label{problem formulation}

\subsection{Graph theory}
Consider an undirected graph $\mathcal {G}=(\mathcal {V}, \mathcal {E},
\mathcal {A})$, where $\mathcal {V}= \{\mathcal {V}_1, \mathcal
{V}_2,\cdots, \mathcal {V}_N\}$ is a finite nonempty node set and $\mathcal
{E} \subseteq \mathcal {V}\times \mathcal {V}$ is an edge set of unordered
pairs of nodes. The edge $(i,j)$ in the edge set of an undirected graph
means that nodes $i$ and $j$ obtain information from each
other. Node $i$ is called a neighbor of node $j$ if $(i,j)\in \mathcal
{E}$. The set of neighbors of node $i$ is defined as $N_i=\{j|(i,j)\in
\mathcal {E}\}$. $\mathcal {G}$ is a simple graph if it has no self-loops
or repeated edges. If there is a path between any two nodes of the graph
$\mathcal {G}$, then the graph $\mathcal {G}$ is connected, otherwise it is
disconnected. The adjacency matrix $\mathcal {A}=[a_{ij}]\in R^{N\times N}$
of the undirected graph $\mathcal {G}$ is defined as $a_{ij}=a_{ji}=1$ if
$(i,j)\in \mathcal {E}$, and $a_{ij}=a_{ji}=0$ otherwise. The degree matrix
$\mathcal {D} =\mathrm{diag}\{d_1,\cdots,d_N\}\in R^{N\times N}$ is a
diagonal matrix, whose diagonal elements are $d_i =
\sum\limits_{j=1}^{N}a_{ij}$ for $i = 1,\cdots,N$. The Laplacian matrix of
the graph is defined as
$
\mathcal {L} =\mathcal {D} - \mathcal {A}.
$
It is symmetric when $\mathcal {G}$ is undirected.

\subsection{Problem Formulation}
 Consider a system consisting of $N$ agents and a leader. The communication
 topology between the agents is described by a simple undirected connected graph $\mathcal {G}$. 
 Dynamics of the $i$th agent are described by the equation
\begin{equation} \label{agents dymamic}
 \dot{x}_i=Ax_i+B_1u_i+ B_2\sum\limits_{j\in N_i} \varphi (x_j(.) \mid_0^t-x_i(.) \mid_0^t),
\end{equation}
where the notation $\varphi(y (.) \mid_0^t)$ describes an operator mapping
functions $y(s)$, $0\leq s \leq t$, into $\Re^n$. Also, $x_i\in \Re^n$ is
the state, $u_i\in \Re^p$ is the control input. We note that the last term
in (\ref{agents dymamic}) reflects a relative nature of interactions
between agents.

Let $L_{2e} [0, \infty)$ be the space of functions $y(.): [0, \infty)
\rightarrow \Re ^n$ such that $\int_{0}^{T}\|y(t)\|^2 dt < \infty,~\forall
T >0$.

\begin{assumption}\label{A1}
$\varphi$ is a linear operator mapping $L_{2e} [0,
\infty)\rightarrow L_{2e} [0, \infty)$ which satisfies the following
integral quadratic constraint (IQC) condition  \cite{[Ian2000]}. There
exists a sequence $\{t_l\}, t_l \rightarrow \infty$, and a constant $d>0$
such that for every $t_l$,
\begin{equation} \label{IQC}
\int_{0}^{t_l}\|\varphi(y(.) \mid_0^t)\|^2dt \leq \int_{0}^{t_l} \| y \|^2 dt + d, ~\forall y \in L_{2e} [0, \infty).
\end{equation}
The class of such operators will be denoted by $\Xi_0$.
\end{assumption}

%

In addition to the system (\ref{agents dymamic}), suppose a leader is given. The dynamics of the leader, labeled $0$, is expressed as
\begin{equation} \label{leader dymamic}
 \dot{x}_0=Ax_0,
\end{equation}
where $x_0\in \Re^n$ is its state. We assume throughout the paper that the
leader node can be observed from a subset of nodes of the graph $\mathcal
{G}$. If the leader is observed by node $i$, we extend the graph
$\mathcal{G}$ by adding the edge $(0,i)$ with weighting gain $g_i=1$,
otherwise let $g_i=0$. We refer to node $i$ with $g_i\neq 0$ as a pinned or
controlled node. Denote the pinning matrix as $G=\mathrm{diag}\left[g_1,
  \cdots, g_N\right]\in \Re ^{N\times N}$. The system is assumed to have at
least one agent connected to the leader, hence $G \neq 0$.

Define synchronization error vectors as
$e_i=x_0-x_i$, $i=1,2,\ldots,N$.
Dynamics of these vectors satisfy the equation
\begin{equation} \label{error dymamic}
 \dot{e}_i=Ae_i-B_1u_i-B_2 \sum\limits_{j\in N_i}a_{ij} \varphi (e_i(.) \mid_0^t-e_j(.) \mid_0^t).
\end{equation}

In this paper we are concerned with finding a control protocol for each
node $i$ of the form
\begin{equation} \label{controller}
 u_i=-K\{\sum\limits_{j\in N_i}a_{ij}(x_j-x_i) + g_i(x_0-x_i)\},
\end{equation}
where $K \in \Re^{p\times n}$ is the feedback gain matrix to be found.
As a measure of the system performance under this protocol, we will
use the quadratic cost function (cf.~\cite{[Borrelli2008]}),
\begin{align} \label{cost function}
\mathcal {J}(u)=& \sum\limits_{i=1}^{N} \int_{0}^{\infty}\Big(\frac{1}{2}\sum\limits_{j\in N_i}(e_i - e_j)' Q (e_i - e_j) \nonumber\\
&+ g_i e_i' Q e_i +u_i'R u_i\Big)dt,
\end{align}
where $Q=Q'>0$ and $R=R'>0$ are given weighting matrices, and $u$ denotes
the vector $u=[u_1'~\ldots~u_N']'$.
Each addend in the cost function (\ref{cost function}) penalizes the $i$th
system input, the disagreement between the $i$th and the $j$th system states, as well as
the disagreement between the leader and the pinned agent which can
observe the leader. 

The problem in this paper is to find a control protocol (\ref{controller}) which solves the guaranteed performance leader following consensus control problem as follows:

\begin{problem}\label{prob1}
Under Assumption~\ref{A1}, find a control protocol of the form
(\ref{controller}) such that
\begin{equation} \label{cost function 1}
\sup \limits_{\Xi_0}\mathcal {J}(u) < \infty.
\end{equation}
\end{problem}

It will be shown later in the paper that (\ref{cost function 1}) implies
 \begin{equation} \label{synchronization}
\int_{0}^{\infty}\|e_i\|^2 dt
 < \infty \quad \forall i=1,\ldots,N.
\end{equation}
Hence, solving Problem~\ref{prob1} will guarantee that all the agents
synchronize to the leader in the $L_2$ sense.

\subsection{Associated Guaranteed Cost Decentralized Control Problem}
In this section, we introduce an auxiliary guaranteed cost decentralized control
problem for an interconnected large scale system. Our approach
follows~\cite{[Zhongkui2010],[Hongwei2011]}, however here it results in a
collection of interconnected subsystems.

Taking the linearity of the operator $\varphi$ into account, the closed
loop large-scale system consisting of the collection of the
error dynamics (\ref{error dymamic}) and the protocols
(\ref{controller}) can be written as
\begin{align}\label{large error dymamic}
\dot{e}=&(I_N\otimes A)e + ((\mathcal {L}+G)\otimes (B_1K))e  \nonumber\\
       & - ((\mathcal {L}+G)\otimes B_2)\Phi(t) + (G \otimes B_2)\Phi(t),
\end{align}
where
\begin{align*}
e&=[e_1'~ e_2'~\ldots~e_N']',\nonumber \\
\Phi(t)&=[\varphi'(e_1(.) \mid_0^t)~\varphi'(e_2(.) \mid_0^t)~\ldots~\varphi'(e_N(.) \mid_0^t)],
\end{align*}
and $\otimes$ denotes the Kronecker product.

It was shown in \cite{[Hong2006]} that if the communication graph
$\mathcal {G}$ is connected and has at least one agent connected to the
leader, then the symmetric matrix $\mathcal {L}+G$ is positive definite,
Hence all its eigenvalues are positive.

Let $T\in \Re^{N\times N}$ be an orthogonal matrix such that
\begin{equation} \label{transformation}
T^{-1}(\mathcal {L}+G)T=J=\mathrm{diag}\left[\lambda_1, \cdots, \lambda_N\right].
\end{equation}
Also, let $\varepsilon=(T^{-1}\otimes I_n)e$,
$\varepsilon=[\varepsilon_1'~\ldots~\varepsilon_N']'$ and $\Psi(t)=(T^{-1}
\otimes I_n)\Phi(t)$. Using this coordinate transformation, the system
(\ref{large error dymamic}) can be represented in terms of $\varepsilon$, as
\begin{align}
\label{error dymamic transformation}
\begin{split}
 \dot{\varepsilon} =& \big({I_N\otimes A + J\otimes (B_1K)}\big)\varepsilon -\big(J\otimes B_2)\Psi(t) \\
 &- \big((T^{-1}GT) \otimes B_2\big)\Psi(t),
\end{split}
\end{align}
where
\begin{align}
\Psi(t)=\left(\begin{array}{ccc}
                           \varphi(\sum \limits_{j=1}^{N}(T^{-1})_{1j} e_j(.) \mid_0^t)\\
                           \vdots\\
                           \varphi(\sum \limits_{j=1}^{N}(T^{-1})_{Nj} e_j(.) \mid_0^t)
                        \end{array}\right)=\left(\begin{array}{ccc}
                           \varphi(\varepsilon_1(.) \mid_0^t)\\
                           \vdots\\
                           \varphi(\varepsilon_N(.) \mid_0^t)
                        \end{array}\right).\nonumber
\end{align}
Here we used the assumption that $\varphi(.)$ is a
linear operator. It follows from (\ref{error dymamic transformation}) that
 \begin{align} \label{another error dymamic}
 \dot{\varepsilon}_i &= A\varepsilon_i + \lambda_i B_1K\varepsilon_i +\sum\limits_{j\neq i}\sum\limits_{p} (T^{-1})_{ip} g_{p}T_{pj} B_2\varphi(\varepsilon_j(.) \mid_0^t)\nonumber\\
                     &+(\sum\limits_{p}  (T^{-1})_{ip} g_{p}T_{pi} - \lambda_i)B_2\varphi(\varepsilon_i(.) \mid_0^t) ,
 \end{align}
Then (\ref{error dymamic transformation}) can be regarded as a closed loop
system consisting of $N$ interconnected linear uncertain subsystems of the
following form
 \begin{equation} \label{simu error dymamic1}
 \dot{\varepsilon}_i=A\varepsilon_i + B_{1i}\hat{u}_i + E_{i}\xi_i + L_i
 \eta_{i},
 \end{equation}
each governed by a state feedback controller $ \hat{u}_i= K\varepsilon_i$.
Here we have used the following notation
 \begin{align}
\xi_i&= \varphi(\varepsilon_i (.) \mid_0^t), \label{xi.i} \\
\eta_{i}&= [\xi_1'~ \ldots~
\xi_{i-1}'~\xi_{i+1}'~\ldots~\xi_N']', \label{eta.i} \\
B_{1i}&={\lambda}_i B_{1}, \nonumber \\
E_i&=(f_{i,i} - \lambda_i)B_2,\nonumber  \\
L_i&=B_2[f_{i,1}I,\ldots, f_{i,(i-1)}I,f_{i,(i+1)}I,\ldots,f_{i,N}I],\nonumber  \\
f_{i,j}&=\sum\limits_{p} {T^{-1}}_{ip} g_{p}T_{pj}. \nonumber
\end{align}

According to Assumption~\ref{A1}, the linear coupling $\varphi$ satisfies the
integral quadratic constraints (\ref{IQC}). This implies that the following
two inequalities hold for all $i=1, 2, \cdots, N$:
\begin{align}
\label{IQC 1}
& \int_{0}^{t_l} \|\xi_i \|^2dt \leq \int_{0}^{t_l} \| \varepsilon_i \|^2 dt
+ d, \\
\label{IQC 2}
& \int_{0}^{t_l} \|\eta_{i}\|^2 dt \leq \int_{0}^{t_l} \sum\limits_{j\neq
  i}\|\varepsilon_j\|^2dt + (N-1)d.
\end{align}
It follows from (\ref{IQC 1}) and (\ref{IQC 2}) that the collection of
uncertainty inputs $\xi_i$, $\eta_i$, $i=1, 2, \cdots, N$, represents an
admissible local uncertainty and admissible interconnection inputs for the
large-scale system (\ref{simu error dymamic1}); see
\cite{[Ugrinovskii2005],[Li2007],[Ian2000],[Ugrinovskii2000]}.  The sets of
admissible uncertainty inputs and admissible interconnection inputs for the
system (\ref{simu error dymamic1}) will be denoted by $\Xi$, $\Pi$,
respectively. Then the above discussion can be summarized as follows: If
$\varphi$ satisfies the conditions in Assumption~\ref{A1}, then the corresponding
signals (\ref{xi.i}), (\ref{eta.i}) belong to $\Xi$, $\Pi$, respectively.

Next, consider the performance cost (\ref{cost function}). It is possible to show that
\begin{align}
\label{***}
\mathcal {J}(u) &= \int_{0}^{\infty} \big(e' ((\mathcal {L}+G)\otimes Q) e +
{u}'(I\otimes R){u} \big) dt. \\
%
&= \sum_{i=1}^{N} \int_{0}^{\infty} \big(\lambda_i\varepsilon_i'  Q
\varepsilon_i + \lambda_i^{2} {\varepsilon}_i'K' R K {\varepsilon_i}
\big) dt.
\label{cost function L2}
\end{align}
Thus we conclude that
\begin{equation}\label{equivalent cost function 1}
 \mathcal{J}(u)=  \hat{\mathcal{J}}(\hat{u}),
\end{equation}
where $\hat u=(\hat u_1',\ldots,\hat u_N')'$, $\hat u_i=K\varepsilon_i$, and
\begin{equation}\label{Jhat}
\hat{\mathcal{J}}(\hat{u})=\sum_{i=1}^{N} \int_{0}^{\infty}
\big(\lambda_i \varepsilon_i' Q \varepsilon_i + \lambda_i^{2}
\hat u_i R \hat u_i \big) dt,
\end{equation}

Now consider the auxiliary guaranteed cost decentralized control problem
associated with the uncertain large scale system comprised of the
subsystems (\ref{simu error dymamic1}), with uncertainty inputs
(\ref{xi.i}) and interconnections (\ref{eta.i}), subject to the IQCs
(\ref{IQC 1}), (\ref{IQC 2}). In this problem we wish to find a
decentralized state feedback controller $\hat u=(\hat u_1',\ldots,\hat
u_N')'$, $\hat{u}_i=K\varepsilon_i$ such that
\begin{equation}\label{sup.Jhat}
\sup_{\Xi,\Pi} \hat{\mathcal{J}}(\hat{u}) < \infty.
\end{equation}
The discussion in this section can be summarized as follows.

\begin{lemma}\label{lem2}
Under Assumption~\ref{A1}, if the state feedback controller
$\hat u=(\hat u_1',\ldots,\hat u_N')'$, $\hat{u}_i=K\varepsilon_i$, solves
the auxiliary guaranteed cost decentralized control problem for the
collection of systems (\ref{simu error dymamic1}) and the cost function
(\ref{Jhat}),  then the control
protocol (\ref{controller}) with the gain matrix $K$ solves
Problem~\ref{prob1}.
\end{lemma}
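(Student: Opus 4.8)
The plan is to assemble the equivalences already established in the construction preceding the statement, since Lemma~\ref{lem2} is in essence a summary of that construction. First I would fix an arbitrary operator $\varphi\in\Xi_0$ and close the loop in (\ref{error dymamic}) with the protocol (\ref{controller}). Using $x_j-x_i=e_i-e_j$ and the linearity of $\varphi$, the resulting collection of error dynamics is exactly the large-scale system (\ref{large error dymamic}). Because $\mathcal{G}$ is connected and $G\neq 0$, the matrix $\mathcal{L}+G$ is symmetric positive definite by \cite{[Hong2006]}, so the orthogonal change of coordinates $\varepsilon=(T^{-1}\otimes I_n)e$ in (\ref{transformation}) is well defined and bijective; under it (\ref{large error dymamic}) becomes (\ref{error dymamic transformation}), which splits componentwise into the $N$ interconnected uncertain subsystems (\ref{simu error dymamic1}) driven by the decentralized controller $\hat u_i=K\varepsilon_i$, with uncertainty input $\xi_i$ and interconnection input $\eta_i$ as in (\ref{xi.i})--(\ref{eta.i}).

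Second, I would check admissibility of the induced uncertainty. Substituting $y=\varepsilon_i$ into the IQC (\ref{IQC}) of Assumption~\ref{A1} gives (\ref{IQC 1}), and summing the $N-1$ resulting inequalities over $j\neq i$ (using $\|\eta_i\|^2=\sum_{j\neq i}\|\xi_j\|^2$) gives (\ref{IQC 2}). Hence, for every $\varphi\in\Xi_0$ the induced signals satisfy $\xi_i\in\Xi$ and $\eta_i\in\Pi$ for all $i$; equivalently, the map $\varphi\mapsto\{(\xi_i,\eta_i)\}_{i=1}^N$ sends $\Xi_0$ into $\Xi\times\Pi$.

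Third, I would invoke the cost identity: with $u=-((\mathcal{L}+G)\otimes K)e$ one has $e'((\mathcal{L}+G)\otimes Q)e=\sum_i\lambda_i\varepsilon_i'Q\varepsilon_i$ and $u'(I\otimes R)u=\sum_i\lambda_i^2\hat u_i'R\hat u_i$ by orthogonality of $T$ and (\ref{transformation}), which is precisely (\ref{***})--(\ref{equivalent cost function 1}), so $\mathcal{J}(u)=\hat{\mathcal{J}}(\hat u)$ along the trajectory generated by $\varphi$. Now suppose $\hat u_i=K\varepsilon_i$ solves the auxiliary problem, i.e.\ $\sup_{\Xi,\Pi}\hat{\mathcal{J}}(\hat u)<\infty$. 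Then for every $\varphi\in\Xi_0$, since $\{(\xi_i,\eta_i)\}$ is one particular admissible realization in $\Xi\times\Pi$,
\[
\mathcal{J}(u)=\hat{\mathcal{J}}(\hat u)\le\sup_{\Xi,\Pi}\hat{\mathcal{J}}(\hat u)<\infty .
\]
Taking the supremum over $\varphi\in\Xi_0$ yields (\ref{cost function 1}), so the protocol (\ref{controller}) with this $K$ solves Problem~\ref{prob1}.

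Every step here is routine bookkeeping; the substantive content sits in the transformations already carried out above, so I do not expect a genuine obstacle. The one point deserving a word of care is that the bound is one-sided: not every admissible pair $(\xi_i,\eta_i)\in\Xi\times\Pi$ need be generated by a single operator $\varphi$, so the auxiliary supremum may overestimate $\sup_{\Xi_0}\mathcal{J}(u)$ — but this is harmless since we only need finiteness of an upper bound. A minor secondary check is that the additive constants $d$ and $(N-1)d$ propagated through (\ref{IQC 1})--(\ref{IQC 2}) are finite and independent of $i$, which holds because $d$ in Assumption~\ref{A1} is a single fixed constant.
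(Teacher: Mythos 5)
Your proposal is correct and follows essentially the same route as the paper: the paper's proof likewise notes that every $\varphi\in\Xi_0$ induces admissible inputs in $\Xi$, $\Pi$ for the transformed interconnected system, and then uses the cost identity $\mathcal{J}(u)=\hat{\mathcal{J}}(\hat u)\le\sup_{\Xi,\Pi}\hat{\mathcal{J}}(\hat u)<\infty$ to conclude. You simply spell out in more detail the steps (coordinate transformation, derivation of the IQCs, one-sidedness of the embedding $\Xi_0\to\Xi\times\Pi$) that the paper delegates to the preceding construction, and your sign convention $u=-((\mathcal{L}+G)\otimes K)e$ is in fact the accurate one.
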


\begin{proof}
Since the state feedback controller $\hat{u}_i=K\varepsilon_i$ solves the
auxiliary guaranteed cost decentralized control problem for the collection
of the systems (\ref{simu error dymamic1}), then we have
\begin{equation}\label{}
\sup_{\Xi,\Pi} \hat{\mathcal{J}}(\hat{u}) < c.
\end{equation}
Also we noted that every signal $\varphi$ which satisfies Assumption~\ref{A1}
gives rise to an admissible uncertainty for the large-scale system
consisting of subsystems (\ref{simu error
  dymamic1}). This implies that for any $\varphi \in \Xi_0$ with
$u=(\mathcal{L}+G)\otimes Ke$, we have
\begin{equation}\label{equivalent cost function 2}
\mathcal{J}(u) = 
\hat{\mathcal{J}}(\hat{u}) \leq \sup_{\Xi, \Pi} \hat{\mathcal{J}}(\hat{u})< c.
\end{equation}
It implies that  the control protocol
(\ref{controller}) with the same gain $K$ solves Problem~\ref{prob1}.
\end{proof}

Note that since $\mathcal{L}+G$ is positive definite, then
(\ref{synchronization}) follows from
(\ref{equivalent cost function 1}) and (\ref{***}).

\section{The Main Results}

The main results of this paper are
sufficient conditions under which the system (\ref{agents dymamic}),
(\ref{controller})  tracks
the leader with guaranteed consensus tracking performance.

\begin{theorem}\label{Theorem 1}
Let matrices $Y=Y'>0, Y \in \Re^{n\times n} $ and $F\in \Re^{p\times n}$, and constants ${\pi}_i>0$, ${\theta}_i>0$, $i=1, 2,\cdots, N$, exist such that the following LMIs are satisfied simultaneously
\begin{equation}
\label{LMI TH1}
\left[
\begin{array}{ccccc}
Z_i   & F'  & YQ_i ^{1/2}  & Y& \mathbf{1}'\otimes Y \\
F  & -\frac{1}{\lambda_i^2}R^{-1}  & 0  & 0 &        0  \\
 Q_i^{1/2}Y & 0  & -I & 0 & 0 \\
 Y & 0 & 0 & -\frac{1}{{\pi}_i}I & 0  \\
 \mathbf{1}\otimes Y & 0 & 0 & 0  & -\Theta_i^{-1}
\end{array}\right]<0,
\end{equation}
 where
 \begin{align*}
 \mathbf{1}&=[1~\ldots~1]'\in\Re^{N-1},\\
 p_i &= f_{i,i}-\lambda_i, \\
 q_i &= \big(\sum \limits_{j\neq i} f_{i,j}^2\big)^{1/2},\\
 Q_i &=\lambda_iQ, \\
 \end{align*}
 and
\begin{align*}
&Z_i=A Y + Y A' +  \lambda_iF'B_1'+\lambda_iB_1 F\\
 &+({\pi}_i^{-1}(p_i)^2+ {\theta}_i^{-1} q_{i}^2)B_2B_2',\\
&\Theta_i=\mathrm{diag}[\theta_1,\ldots,\theta_{i-1},
\theta_{i+1},\ldots,\theta_N] \otimes I_{n}.
\end{align*}
Then the control protocol (\ref{controller}) with $K= FY^{-1}$ solves
Problem~\ref{prob1}. Furthermore, this protocol guarantees the following
performance bound
 \begin{equation} \label{cost function TH1}
 \sup_{\Xi_0}\mathcal {J}(u) \le \sum_{i=1}^{N} ({\pi}_i+
 {\theta}_i (N-1)) d + \sum_{i=1}^{N} e_i'(0)Y^{-1}e_i(0).
\end{equation}
\end{theorem}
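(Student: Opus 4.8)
The plan is to reduce the statement to the auxiliary guaranteed-cost decentralized problem via Lemma~\ref{lem2}, and then to solve that problem by a minimax / S-procedure argument: construct a quadratic storage function for the interconnected subsystems \eqref{simu error dymamic1}, absorb the integral quadratic constraints \eqref{IQC 1}, \eqref{IQC 2} using the multipliers $\pi_i$ and $\theta_i$, and read off the resulting bound on $\hat{\mathcal J}(\hat u)$. By \eqref{equivalent cost function 1} this bound then transfers verbatim to $\mathcal J(u)$, and finiteness of the supremum additionally yields \eqref{synchronization} because $\mathcal L+G>0$.

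First I would translate the LMI \eqref{LMI TH1} into a Riccati-type inequality. Pre- and post-multiplying \eqref{LMI TH1} by $\mathrm{diag}[Y^{-1},I,I,I,I]$ and eliminating the diagonal blocks $-\tfrac{1}{\lambda_i^2}R^{-1}$, $-I$, $-\tfrac{1}{\pi_i}I$, $-\Theta_i^{-1}$ by the Schur complement shows that, with $P:=Y^{-1}>0$ and $K=FY^{-1}$, \eqref{LMI TH1} is equivalent to
\begin{align*}
&PA+A'P+\lambda_i\big(PB_1K+K'B_1'P\big)+\lambda_iQ+\lambda_i^2K'RK\\
&\quad+\big(\pi_i^{-1}p_i^2+\theta_i^{-1}q_i^2\big)PB_2B_2'P+\pi_iI+\Big(\textstyle\sum_{j\neq i}\theta_j\Big)I<0
\end{align*}
for every $i=1,\dots,N$. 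The only nonobvious point here is to observe, from the definitions of $E_i$, $L_i$, $p_i$, $q_i$, that $E_iE_i'=p_i^2B_2B_2'$ and $L_iL_i'=q_i^2B_2B_2'$, so the quadratic-in-$P$ term equals $\pi_i^{-1}PE_iE_i'P+\theta_i^{-1}PL_iL_i'P$, while the $(\mathbf{1}'\otimes Y,\,-\Theta_i^{-1})$ block produces exactly the $\big(\sum_{j\neq i}\theta_j\big)I$ term.

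Next I would take $V(t)=\sum_{i=1}^N\varepsilon_i(t)'P\varepsilon_i(t)$ as a storage function for \eqref{simu error dymamic1} driven by $\hat u_i=K\varepsilon_i$, differentiate along trajectories, and bound the cross terms by the elementary inequalities $2\varepsilon_i'PE_i\xi_i\le\pi_i^{-1}\varepsilon_i'PE_iE_i'P\varepsilon_i+\pi_i\|\xi_i\|^2$ and $2\varepsilon_i'PL_i\eta_i\le\theta_i^{-1}\varepsilon_i'PL_iL_i'P\varepsilon_i+\theta_i\|\eta_i\|^2$. The Riccati inequality then gives the dissipation estimate
\begin{align*}
&\dot V+\sum_{i=1}^N\big(\lambda_i\varepsilon_i'Q\varepsilon_i+\lambda_i^2\hat u_i'R\hat u_i\big)\\
&\le\sum_{i=1}^N\Big(\pi_i\|\xi_i\|^2-\pi_i\|\varepsilon_i\|^2+\theta_i\|\eta_i\|^2-\big(\textstyle\sum_{j\neq i}\theta_j\big)\|\varepsilon_i\|^2\Big).
\end{align*}
Integrating over $[0,t_l]$ and substituting \eqref{IQC 1} (times $\pi_i$) and \eqref{IQC 2} (times $\theta_i$) replaces $\pi_i\int_0^{t_l}\|\xi_i\|^2$ by $\pi_i\int_0^{t_l}\|\varepsilon_i\|^2+\pi_id$ and $\theta_i\int_0^{t_l}\|\eta_i\|^2$ by $\theta_i\int_0^{t_l}\sum_{j\neq i}\|\varepsilon_j\|^2+\theta_i(N-1)d$. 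I expect the main obstacle to be precisely this bookkeeping: one must verify that all remaining state-dependent terms cancel. The $\pi_i$ terms cancel subsystem by subsystem, while for the interconnection terms one invokes the relabeling identity $\sum_i\theta_i\sum_{j\neq i}\|\varepsilon_j\|^2=\sum_i\big(\sum_{j\neq i}\theta_j\big)\|\varepsilon_i\|^2$, which exactly offsets the $-(\sum_{j\neq i}\theta_j)\|\varepsilon_i\|^2$ contributions; this is where the coupled large-scale structure is handled, and it is the reason the extra $\Theta_i$ and $\mathbf{1}\otimes Y$ blocks appear in \eqref{LMI TH1}. What is left is $V(t_l)-V(0)+\int_0^{t_l}\sum_i(\lambda_i\varepsilon_i'Q\varepsilon_i+\lambda_i^2\hat u_i'R\hat u_i)\,dt\le\sum_i(\pi_i+\theta_i(N-1))d$; since $V(t_l)\ge0$ and the integrand is nonnegative, letting $t_l\to\infty$ along the sequence of Assumption~\ref{A1} gives $\hat{\mathcal J}(\hat u)\le\sum_i(\pi_i+\theta_i(N-1))d+V(0)$, a bound uniform over $\Xi,\Pi$.

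Finally, Lemma~\ref{lem2} yields that the protocol \eqref{controller} with $K=FY^{-1}$ solves Problem~\ref{prob1}, and by \eqref{equivalent cost function 1}, $\mathcal J(u)=\hat{\mathcal J}(\hat u)$. It only remains to rewrite the initial-condition term: since $T$ is orthogonal and $\varepsilon(0)=(T^{-1}\otimes I_n)e(0)$, we have $V(0)=\varepsilon(0)'(I_N\otimes Y^{-1})\varepsilon(0)=e(0)'(I_N\otimes Y^{-1})e(0)=\sum_{i=1}^N e_i'(0)Y^{-1}e_i(0)$, which is the last term of \eqref{cost function TH1}.
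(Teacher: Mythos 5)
Your proposal is correct and follows essentially the same route as the paper: Schur complement of \eqref{LMI TH1} to a per-agent Riccati inequality (yours is the paper's inequality congruence-transformed by $Y^{-1}$ with $F=KY$), the block Lyapunov/storage function $\sum_i\varepsilon_i'Y^{-1}\varepsilon_i$ with completion of squares against the cross terms, the relabeling identity for the $\theta_i$ terms, integration with the IQCs \eqref{IQC 1}--\eqref{IQC 2}, and transfer back to $\mathcal J(u)$ via Lemma~\ref{lem2}. Your explicit remarks that $E_iE_i'=p_i^2B_2B_2'$, $L_iL_i'=q_i^2B_2B_2'$ and that orthogonality of $T$ gives $V(\varepsilon(0))=\sum_i e_i'(0)Y^{-1}e_i(0)$ only make explicit steps the paper leaves implicit.
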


\begin{proof}
 Using the Schur complement, the LMIs (\ref{LMI TH1}) can be transformed into the following Riccati inequality
\begin{align}\label{riccati inequality}
&A Y + Y A'  + \Big[\frac{p_i^2}{{\pi}_i} + \frac {q_{i}^2}{{\theta}_i}\Big]B_2B_2' + Y[ Q_i+ ({\pi}_i + {\bar{\theta}}_i) I ] Y \nonumber\\
&\quad + \lambda_i^2 F'R F + \lambda_iF'B_1' + \lambda_iB_1 F < 0,
\end{align}
where ${\bar{\theta}}_i=\sum \limits_{j\neq i} {\theta}_i$.

Consider the following Lyapunov function candidate for the interconnected
systems (\ref{simu error dymamic1}), (\ref{IQC 1}), (\ref{IQC 2}):
\begin{equation}
V(\varepsilon)=\sum_{i=1}^{N}\varepsilon_i'Y^{-1}\varepsilon_i.
\end{equation}
For the controller $\hat{u}_i=K \varepsilon_i$, we have
\begin{align}
\label{lyapunov equation1}
&\frac{d V(\varepsilon)}{dt}= 2 \sum_{i=1}^{N}\varepsilon_i'Y^{-1}p_{i}B_2\xi_i+ 2\sum_{i=1}^{N}\varepsilon_i'Y^{-1}L_i \eta_{i} \\
&+ \sum_{i=1}^{N}\varepsilon_i'\Big[Y^{-1}(A+\lambda_i B_1K)+(A+\lambda_i B_1K)'Y^{-1}\Big]\varepsilon_i. \nonumber
\end{align}
Substitute $F=KY$ into the Riccati inequality (\ref{riccati inequality}). Then after pre- and post-multiplying (\ref{riccati inequality}) by $Y^{-1}$ and substituting it into (\ref{lyapunov equation1}), we have
\begin{align*}
&\frac{d V(\varepsilon)}{dt}<  - \sum_{i=1}^{N} \Big(\varepsilon_i' {\pi}_i^{-1} p_i^{2}Y^{-1}B_2B_2'Y^{-1} \varepsilon_i \\
& +2 \varepsilon_i'Y^{-1}p_{i}B_2\xi_i - {\pi}_i \parallel\xi_i\parallel^{2} + {\pi}_i\parallel\xi_i\parallel^{2} -{\pi}_i \parallel\varepsilon_i\parallel^{2} \\
&   - \varepsilon_i'{\theta}_i^{-1} q_i^{2} Y^{-1}B_2B_2'Y^{-1} \varepsilon_i +2\varepsilon_i'Y^{-1}L_i \eta_{i} - {\theta}_i \parallel\eta_i\parallel^{2} \\
&  + {\theta}_i \parallel\eta_i\parallel^{2} -{\bar{\theta}}_i \parallel\varepsilon_i\parallel^{2} \Big) -\sum_{i=1}^{N}\varepsilon_i' \Big[\lambda_i^2 K'RK + Q_i \Big]\varepsilon_i. \nonumber
\end{align*}
Using the following identity,
\begin{align}\label{identity}
\sum\limits_{i=1}^{N} \sum\limits_{j\neq i} {\theta}_{i} \|\varepsilon_j\|_2^2 =\sum\limits_{i=1}^{N}{\bar{\theta}}_{i}\|\varepsilon_i\|_2^2,
\end{align}
one has
\begin{align}
\int_{0}^{t_l}\frac{d V(\varepsilon)}{dt}dt &<-\sum_{i=1}^{N}\int_{0}^{t_l}\varepsilon_i' (\lambda_i^2 K'RK + Q_i )\varepsilon_i dt \nonumber\\
&+ \sum_{i=1}^{N} {\pi}_i\int_{0}^{t_l}(\parallel\xi_i\parallel^{2}-\parallel\varepsilon_i\parallel^{2})dt \nonumber\\
& +\sum_{i=1}^{N}{\theta}_i\int_{0}^{t_l}(\parallel\eta_i\parallel^{2}-\sum
\limits_{j\neq i} \parallel\varepsilon_j\parallel^{2})dt.  \nonumber
\end{align}
Finally, using the IQCs (\ref{IQC 1}) and (\ref{IQC 2}) and $V(\varepsilon(t)) > 0 $, we obtain
\begin{align}
\sum_{i=1}^{N} \int_{0}^{t_l}\varepsilon_i' (\lambda_i^2 K'R K + Q_i
)\varepsilon_i dt <& \sum_{i=1}^{N}({\pi}_i  +  {\theta}_i (N-1))d\nonumber\\
&+ V(\varepsilon(0)).
\end{align}
The expression on the right hand side of the above inequality is independent of $t_l$. Letting $t_l \rightarrow \infty$ leads to
\begin{align}
\label{cost1}
\mathcal{J}(\hat u)\leq \sum_{i=1}^{N} ({\pi}_i + \sum_{i=1}^{N} {\theta}_i (N-1)) d + V(\varepsilon(0)).
\end{align}
Also (\ref{cost1}) holds for arbitrary collection of inputs $\xi_i$,
$\eta_i$ that satisfy (\ref{IQC 1}), (\ref{IQC 2}), respectively. Then
we conclude that
\begin{align}
\label{cost1.2}
\sup_{\Xi,\Pi} \hat{\mathcal{J}}(\hat u)\le
\sum_{i=1}^{N} (({\pi}_i + \sum_{i=1}^{N}{\theta}_i (N-1)) d  +
  e_i'(0)Y^{-1}e_i(0)).
\end{align}
The result of the theorem now follows from Lemma~\ref{lem2} and
(\ref{equivalent cost function 2}).
\end{proof}

According to Theorem~\ref{Theorem 1}, one has to solve $N$ coupled LMIs to obtain the
control gain $K$. To simplify the calculation, we establish the
following theorem which requires only one LMI to be feasible, as follows

\begin{equation}
\label{LMIT3}
\left[
\begin{array}{cccccccccc}
\bar Z            & Y (\bar\lambda Q) ^{1/2}  & Y     &   Y\\
(\bar\lambda Q) ^{1/2}Y        &    -I       &             0             &          0            \\
 Y  &    0        &  -\frac{1}{\pi}I          &          0            \\
 Y  &    0        &             0             & -\frac{1}{(N-1){\theta}}I
\end{array}\right]<0,
\end{equation}
where
\begin{align*}
p^2 =\max \limits_i(p_i)^2, \\
q^2=\max \limits_i(q_{i}^2),\\
\underline{\lambda} = \min \limits_i(\lambda_i), \\
\bar{\lambda} = \max \limits_i(\lambda_i),
\end{align*}
and
\begin{align*}
\bar Z= A Y + Y A' - \frac{\underline{\lambda}^2}{\bar \lambda^2}
B_1R^{-1}B_{1}' + \Big[\frac{p^2}{{\pi}} + \frac
{q^2}{{\theta}}\Big]B_2B_2' .
\end{align*}
\begin{lemma}\label{Lemma 3}
Given $R=R'>0$ and $Q> 0$, suppose the LMI
(\ref{LMIT3}) in variables $Y=Y'>0$, $\pi^{-1}>0$ and $\theta^{-1}>0$ is feasible.
Then the matrices and constants
\begin{align}
Y,\quad F=-\frac{\underline{\lambda}}{{\bar\lambda}^2} R^{-1}B_1', \quad
\pi_i=\pi, \quad \theta_i=\theta
\end{align}
are a feasible set of matrices and constants for the collection of LMIs
(\ref{LMI TH1}).
\end{lemma}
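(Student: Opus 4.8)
The plan is to reduce both LMI systems to Riccati inequalities by the Schur complement and then show that, once the proposed substitution is made, the $i$-th Riccati inequality behind (\ref{LMI TH1}) is dominated, in the order on symmetric matrices, by the single Riccati inequality behind (\ref{LMIT3}). The only step that is not pure monotonicity is a scalar estimate on the control term produced by the particular choice of $F$.

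First I would recall, as already observed after the statement of Theorem~\ref{Theorem 1}, that the $i$-th inequality in (\ref{LMI TH1}) is equivalent (Schur complement with respect to its $(2,2)$--$(5,5)$ blocks) to the Riccati inequality (\ref{riccati inequality}). Applying the analogous Schur complement to (\ref{LMIT3}), whose lower-right $3\times3$ block $\mathrm{diag}[-I,-\tfrac1\pi I,-\tfrac1{(N-1)\theta}I]$ is negative definite, shows that feasibility of (\ref{LMIT3}) is equivalent to
\begin{equation}\label{Rbar}
AY+YA'-\tfrac{\underline\lambda^2}{\bar\lambda^2}B_1R^{-1}B_1'+\Big[\tfrac{p^2}{\pi}+\tfrac{q^2}{\theta}\Big]B_2B_2'+\bar\lambda\,YQY+(\pi+(N-1)\theta)YY<0 .
\end{equation}
Since $Y>0$, $\pi>0$, $\theta>0$ already come with the feasibility of (\ref{LMIT3}), the positivity requirements of Theorem~\ref{Theorem 1} are automatically met by $Y$, $\pi_i=\pi$, $\theta_i=\theta$, so it remains only to check that, with $F=-\tfrac{\underline\lambda}{\bar\lambda^2}R^{-1}B_1'$, $\pi_i=\pi$, $\theta_i=\theta$ (hence $\bar\theta_i=(N-1)\theta$, $Q_i=\lambda_iQ$), the left-hand side of (\ref{riccati inequality}) is dominated by that of (\ref{Rbar}) for every $i=1,\dots,N$.

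The crux is the control term. With the proposed $F$ one has $F'=-\tfrac{\underline\lambda}{\bar\lambda^2}B_1R^{-1}$, so $\lambda_i^2F'RF+\lambda_iF'B_1'+\lambda_iB_1F=c_i\,B_1R^{-1}B_1'$ with $c_i=\tfrac{\lambda_i^2\underline\lambda^2}{\bar\lambda^4}-\tfrac{2\lambda_i\underline\lambda}{\bar\lambda^2}$. I claim $c_i\le-\underline\lambda^2/\bar\lambda^2$ for every $i$: setting $t=\lambda_i\underline\lambda/\bar\lambda^2$, the bounds $\underline\lambda\le\lambda_i\le\bar\lambda$ give $t\in[\underline\lambda^2/\bar\lambda^2,\ \underline\lambda/\bar\lambda]\subseteq(0,1]$ and $c_i=t^2-2t$; since $t\mapsto t^2-2t$ is decreasing on $(-\infty,1]$, $c_i\le s^2-2s$ with $s=\underline\lambda^2/\bar\lambda^2$, and $s^2-2s\le-s$ because $s\le1$. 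As $B_1R^{-1}B_1'$ is positive semidefinite, this gives $c_iB_1R^{-1}B_1'\le-\tfrac{\underline\lambda^2}{\bar\lambda^2}B_1R^{-1}B_1'$. The remaining terms are handled by monotonicity: $\tfrac{p_i^2}{\pi}+\tfrac{q_i^2}{\theta}\le\tfrac{p^2}{\pi}+\tfrac{q^2}{\theta}$ by the definitions $p^2=\max_ip_i^2$, $q^2=\max_iq_i^2$ (and $\pi,\theta>0$), so, $B_2B_2'$ being positive semidefinite, the coupling term of (\ref{riccati inequality}) is dominated by that of (\ref{Rbar}); and $Y[\lambda_iQ+(\pi+\bar\theta_i)I]Y=\lambda_iYQY+(\pi+(N-1)\theta)YY\le\bar\lambda\,YQY+(\pi+(N-1)\theta)YY$ because $\lambda_i\le\bar\lambda$ and $YQY>0$. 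Adding these three estimates to the common term $AY+YA'$ shows the left-hand side of (\ref{riccati inequality}) is dominated by the left-hand side of (\ref{Rbar}), which is negative definite by hypothesis; hence (\ref{riccati inequality}), equivalently the $i$-th LMI in (\ref{LMI TH1}), holds for every $i$. The main obstacle is precisely the scalar estimate $c_i\le-\underline\lambda^2/\bar\lambda^2$: this is where both the specific normalization $F=-\tfrac{\underline\lambda}{\bar\lambda^2}R^{-1}B_1'$ and the a priori spectral range $\lambda_i\in[\underline\lambda,\bar\lambda]$ are used; the per-subsystem optimal choice $-\tfrac1{\lambda_i}R^{-1}B_1'$ is $i$-dependent and thus inadmissible here, which forces this compromise value.
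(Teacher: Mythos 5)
Your proposal is correct and takes essentially the same route as the paper: Schur-complement both (\ref{LMIT3}) and (\ref{LMI TH1}) into Riccati form, then show the $i$-th inequality (\ref{riccati inequality}) with $F=-\frac{\underline\lambda}{\bar\lambda^2}R^{-1}B_1'$, $\pi_i=\pi$, $\theta_i=\theta$ is dominated by the single inequality (\ref{one inequality}) using $\underline\lambda\le\lambda_i\le\bar\lambda$, $p_i^2\le p^2$, $q_i^2\le q^2$. The only cosmetic difference is in the control-term estimate: you argue via monotonicity of $t\mapsto t^2-2t$ to get $c_i\le-\underline\lambda^2/\bar\lambda^2$, while the paper bounds $\lambda_i^2F'RF\le\bar\lambda^2F'RF$ and the cross term $-2\lambda_i\underline\lambda/\bar\lambda^2\le-2\underline\lambda^2/\bar\lambda^2$ separately, yielding the identical bound $-\frac{\underline\lambda^2}{\bar\lambda^2}B_1R^{-1}B_1'$.
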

\begin{proof}
Using the Schur complement, the above LMI (\ref{LMIT3}) is equivalent to the following Riccati inequality
\begin{align}
\label{one inequality}
&A Y + Y A' - \frac{\underline{\lambda}^2}{\bar\lambda^2}
B_1R^{-1}B_{1}' + \Big[\frac{p^2}{{\pi}} + \frac {q^2}{{\theta}}\Big]B_2B_2' \nonumber\\
 &+ Y [\bar\lambda Q + ({\pi} + \bar{\theta}) I ] Y < 0,
\end{align}
where $\bar{\theta}=(N-1){\theta}$.

Since $\underline{\lambda}\le \lambda_i \le \bar\lambda $, (\ref{one
  inequality}) implies that
\begin{align}
\label{one inequality22}
&A Y + Y A' - \lambda_i \frac{\underline{\lambda}}{\bar\lambda^2}
B_1R^{-1}B_{1}' - \lambda_i \frac{\underline{\lambda}}{\bar\lambda^2} B_1R^{-1}B_{1}' \nonumber \\
& + \frac{\underline{\lambda}^2}{\bar\lambda^2} B_1R^{-1}B_{1}' + \Big[\frac{p^2}{{\pi}} + \frac {q^2}{{\theta}}\Big]B_2B_2' \nonumber \\
& + Y [\lambda_i Q + ({\pi} + \bar{\theta}) I ] Y< 0.
\end{align}
Substitute 
$F=-\frac{\underline{\lambda}}{{\bar\lambda}^2} R^{-1}B_1'$,
$\pi_i=\pi$, $\theta_i=\theta$, $\bar\theta_i=(N-1)\theta=\sum \limits_{j\neq
  i}\theta_i$,  and let $ p_i^{2}=(f_{i,i}-\lambda_i)^2
\leq p^2$ and $q_i^2=(\sum \limits_{j\neq i} f_{i,j}^2)\leq q^2$, then we
obtain
\begin{equation}
\label{one inequality2}
\begin{split}
&\lefteqn{A Y + Y A' + \lambda_i F'B_{1}' + \lambda_i B_1F
+ \bar\lambda^2 F'R^{-1}F}  \\
&+ \Big[\frac{p_i^2}{{\pi_i}} + \frac {q_i^2}{{\theta_i}}\Big]B_2B_2'
 + Y [\lambda_i Q + ({\pi_i} + \bar{\theta_i}) I ] Y< 0.
\end{split}
\end{equation}
Noting that $\lambda_i^2 F'R^{-1}F\le \bar\lambda^2 F'R^{-1}F$, we obtain
the Riccati inequality (\ref{riccati inequality}) which is equivalent to
(\ref{LMI TH1}).
\end{proof}

\begin{theorem}\label{Theorem 2}
Given $R=R'>0$ and $Q> 0$, suppose the LMI
(\ref{LMIT3}) in variables $Y=Y'>0$, $\pi^{-1}>0$ and $\theta^{-1}>0$ is
feasible. Then the control protocol (\ref{controller}) with $K=
-\frac{\underline \lambda}{\bar \lambda^2} R^{-1}B_1'Y^{-1}$ solves
Problem~\ref{prob1}. Furthermore, this protocol guarantees the following
performance bound
 \begin{equation} \label{cost function TH2}
 \sup_{\Xi_0}\mathcal {J}(u) \le N({\pi}+
 {\theta} (N-1)) d + \sum_{i=1}^{N} e_i'(0)Y^{-1}e_i(0).
\end{equation}
\end{theorem}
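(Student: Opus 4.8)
The plan is to obtain Theorem~\ref{Theorem 2} as an immediate corollary of Lemma~\ref{Lemma 3} and Theorem~\ref{Theorem 1}, with essentially no new analytic work. First I would invoke Lemma~\ref{Lemma 3}: by hypothesis the single LMI (\ref{LMIT3}) is feasible in the variables $Y=Y'>0$, $\pi^{-1}>0$, $\theta^{-1}>0$, so the lemma guarantees that the matrices and constants $Y$, $F=-\frac{\underline{\lambda}}{\bar\lambda^2}R^{-1}B_1'$, $\pi_i=\pi$, $\theta_i=\theta$ for all $i=1,\ldots,N$, constitute a feasible set for the collection of $N$ coupled LMIs (\ref{LMI TH1}). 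This is precisely the content of Lemma~\ref{Lemma 3}, and it is the step that carries the real analytic burden (the Schur-complement reduction to the Riccati inequality and the $\underline{\lambda}\le\lambda_i\le\bar\lambda$ estimates), all of which has already been discharged in its proof.

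Next I would feed this feasible set into Theorem~\ref{Theorem 1}. Since $Y$, $F$, $\{\pi_i\}$, $\{\theta_i\}$ satisfy (\ref{LMI TH1}), Theorem~\ref{Theorem 1} applies verbatim: the control protocol (\ref{controller}) with gain $K=FY^{-1}$ solves Problem~\ref{prob1}. For the choice $F=-\frac{\underline{\lambda}}{\bar\lambda^2}R^{-1}B_1'$ this gain becomes $K=-\frac{\underline{\lambda}}{\bar\lambda^2}R^{-1}B_1'Y^{-1}$, which is exactly the gain asserted in Theorem~\ref{Theorem 2}. This disposes of the first claim of the theorem.

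It then remains only to specialize the performance bound. Theorem~\ref{Theorem 1} gives $\sup_{\Xi_0}\mathcal{J}(u)\le \sum_{i=1}^{N}(\pi_i+\theta_i(N-1))d + \sum_{i=1}^{N} e_i'(0)Y^{-1}e_i(0)$; substituting $\pi_i=\pi$ and $\theta_i=\theta$ collapses the first summation to $N(\pi+\theta(N-1))d$, which yields the bound (\ref{cost function TH2}) and completes the proof.

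I do not anticipate any genuine obstacle here: the argument is pure bookkeeping. The only points worth a moment's care are consistency checks — that Lemma~\ref{Lemma 3} indeed produces a \emph{single} gain matrix $F$ (matching the single $K$ used in Theorem~\ref{Theorem 1} rather than node-dependent gains), and that the constant $N(\pi+\theta(N-1))d$ is the correct specialization of $\sum_i(\pi_i+\theta_i(N-1))d$ under the uniform choice $\pi_i\equiv\pi$, $\theta_i\equiv\theta$. Everything else, including the IQC estimates (\ref{IQC 1})--(\ref{IQC 2}), the Lyapunov function $V(\varepsilon)=\sum_i\varepsilon_i'Y^{-1}\varepsilon_i$, and the reduction via Lemma~\ref{lem2}, is inherited unchanged from the proofs already given.
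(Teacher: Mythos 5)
Your argument is exactly the paper's: invoke Lemma~\ref{Lemma 3} to obtain the feasible set $Y$, $F=-\frac{\underline{\lambda}}{\bar\lambda^2}R^{-1}B_1'$, $\pi_i=\pi$, $\theta_i=\theta$ for the LMIs (\ref{LMI TH1}), then apply Theorem~\ref{Theorem 1} to get the gain $K=FY^{-1}$ and specialize its bound, which is precisely how the paper disposes of Theorem~\ref{Theorem 2} ("readily follows from Lemma~\ref{Lemma 3} and Theorem~\ref{Theorem 1}"). Your bookkeeping of the constant $N(\pi+\theta(N-1))d$ is the correct specialization, so the proposal is correct and matches the paper's route.
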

\begin{proof}
The proof readily follows from Lemma~\ref{Lemma 3} and Theorem~\ref{Theorem 1}.
\end{proof}

\section{Example}
To illustrate the proposed method, consider a system consisting of three
identical pendulums coupled by two springs. Each pendulum is subject to an
input as shown in Fig.~\ref{pendulums}. The dynamic of the coupled system
is governed by the following equations
\begin{align}
\label{dynamic of pedulums}
ml^2\ddot{\alpha}_1=&-ka^2(\alpha_1-\alpha_2)-mgl\alpha_1-u_1,  \nonumber\\
ml^2\ddot{\alpha}_2=&-ka^2(\alpha_2-\alpha_3)-ka^2(\alpha_2-\alpha_1)\\
 &-mgl\alpha_2-u_2, \nonumber\\
ml^2\ddot{\alpha}_3=&-ka^2(\alpha_3-\alpha_2)-mgl\alpha_3-u_3,\nonumber
\end{align}
where $l$ is the length of the pendulum, $a$ is the position of the spring, $g$ is the gravitational acceleration constant, $m$ is the mass of each pendulum, and  $k$ is the spring constant.

In addition to the three pendulums, consider the leader pendulum which is
identical to those given. Its dynamics are described by the equation
\begin{equation}
\label{leader pedulums}
ml^2\ddot{\alpha}_0=-mgl\alpha_0.
\end{equation}
Choosing the state vectors as $x_0=(\alpha_0, \dot{\alpha}_0)$, $x_1=(\alpha_1, \dot{\alpha}_1)$, $x_2=( \alpha_2, \dot{\alpha}_2)$ and $x_3=(\alpha_3, \dot{\alpha}_3)$, the equation (\ref{dynamic of pedulums}) and (\ref{leader pedulums}) can be written in the form of (\ref{agents dymamic}), (\ref{leader dymamic})
,
where
\begin{align*}
A=\left[\begin{array}{cc}
                          0  &  1  \\
                        -\frac{g}{l} &  0
                        \end{array}\right],
                        \end{align*}
                        \begin{align*}
                         B_1=\left[\begin{array}{c}
                          0    \\
                        -\frac{1}{ml^2}
                        \end{array}\right],
                        \end{align*}
                        \begin{align*}
                         B_2=\left[\begin{array}{cc}
                                0        &  0\\
                         \frac{a^2}{ml^2}&  0
                        \end{array}\right],
                        \end{align*}
                         and
                         \begin{align*}
                         \varphi(x_j-x_i)=k(x_j-x_i).
                         \end{align*}

The parameters of the coupled pendulum system are chosen as $m=0.25kg$, $l=1m$, $a=0.5m$, $g=10m/s^2$, $ 0 \leq k \leq 1 N/m$.

\begin{figure}
\centering
\begin{tikzpicture}[scale=1.1,
    media/.style={font={\footnotesize\sffamily}},
    wave/.style={
        decorate,decoration={snake,post length=1.4mm,amplitude=2mm,
        segment length=2mm},thick},
    interface/.style={
        postaction={draw,decorate,decoration={border,angle=-45,
                    amplitude=0.3cm,segment length=2mm}}},
    gluon/.style={decorate, draw=black,
        decoration={coil,amplitude=4pt, segment length=5pt}},scale=0.7
    ]

%
%
    \draw[blue,line width=1pt](-5.2,0)--(4.2,0);
    \draw[dashed,black](-5,-3)--(-5,0);
    \draw[dashed,black](-2,-3)--(-2,0);
    \draw[dashed,black](1,-3)--(1,0);
    \draw[dashed,black](4,-3)--(4,0);

  \draw[-,gluon]
        (28:-2cm)--(141.5:-1.55cm);
  \draw[-,gluon]
        (141.5:-1.55cm)--(167.5:-4.35cm);
    \draw[black](0:-5cm)--(46:-5.7cm)node[midway]{\scriptsize $~~l$};
    \path (-5,0)++(-84:2cm)node{\scriptsize  $\alpha_0$};

    \draw[->] (-5,-1.5) arc (-90:-63:.75cm);
    \draw[black](0:-2cm)--(76:-4.1cm)node[very near start]{\scriptsize $~~a$};
    \path (-2,0)++(-84:2cm)node{\scriptsize  $\alpha_1$};

    \draw[->] (-2,-1.5) arc (-90:-63:.75cm);
    \draw[black](0:1cm)--(116:-4.6cm);
    \path (1,0)++(-84:2cm)node{\scriptsize  $\alpha_2$};

    \draw[->] (1,-1.5) arc (-90:-63:.75cm);
    \draw[black](0:4cm)--(141:-6.4cm);
    \path (4,0)++(-84:2cm)node{\scriptsize $\alpha_3$};

    \draw[->] (4,-1.5) arc (-90:-63:.75cm);

    \filldraw[fill=white,line width=1pt](-1.77,-0.96)circle(.02cm);
    \filldraw[fill=white,line width=1pt](1.23,-0.96)circle(.02cm);
     \filldraw[fill=white,line width=1pt](4.23,-0.96)circle(.02cm);
    \filldraw[fill=white,line width=1pt](-4,-4)circle(.2cm)node[right]{\scriptsize $~Leader$};
       \filldraw[fill=white,line width=1pt](-1,-4)circle(.2cm)node[right]{\scriptsize $~\leftarrow u_1$};
          \filldraw[fill=white,line width=1pt](2,-4)circle(.2cm)node[left]{\scriptsize $u_2 \rightarrow~$};
             \filldraw[fill=white,line width=1pt](5,-4)circle(.2cm)node[left]{\scriptsize $ u_3\rightarrow~$};
\end{tikzpicture}
\caption{Interconnected pendulums.}
\label{pendulums}
\end{figure}
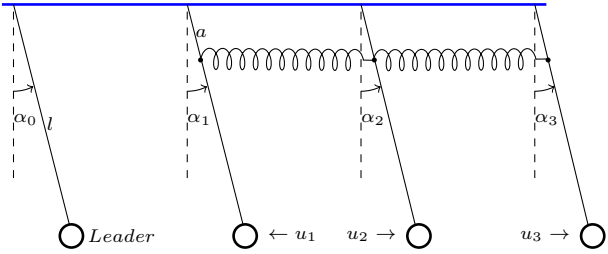

The communication topology of the coupled system is an undirected graph, shown in Fig.~\ref{Leader}. Suppose agent $1$ is equipped with sensors which allow it to observe the leader. Then the Laplacian matrix of the graph consisting of nodes 1, 2 and 3 and the pinning matrix $G$ can be described by
\begin{align*}
\mathcal {L}=\left[\begin{array}{ccc}
                         1    &    -1    &   0\\
                         -1     &    2    &   -1\\
                         0     &    -1    &   1
                        \end{array}\right],
\end{align*}
\begin{align*}
                        G=\left[\begin{array}{ccc}
                         1     &    0    &   0\\
                         0     &    0    &   0\\
                         0     &    0    &   0
                        \end{array}\right].
\end{align*}

It is a straightforward to verify that $\mathcal {L}+G$ is a positive definite symmetric matrix.

Two simulations are implemented for Theorem 1 and Theorem 2,
respectively. We use the same initial conditions and the matrixes $Q=I$ and
$R=0.1$. A computational algorithm based on Theorem 1 yields the gain matrix $K=[3.9870,
4.5178]$. For this gain, we
directly computed the cost function (\ref{cost
  function}) to yield $\mathcal{J}(u)=2.1018$.

The second simulation is based on Theorem~\ref{Theorem 2}. It involves
solving the decoupled
LMIs. Using the same matrices
$Q$ and $R$, and the same initial conditions, the control gain matrix was
computed to be $K=[1.3758,
10.4192]$, and the cost function was computed
numerically to be equal $\mathcal{J}(u)=4.9072$.

In this example, the conservatism is manifested by a greater
synchronization time. Compared with
the method based on Theorem 2, the method based
on Theorem 1 enables the  follower to synchronize to the leader in a
much shorter time,  with a better guaranteed performance.
 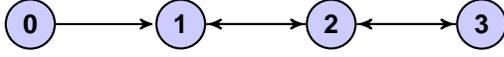
\begin{figure}
 \centering
 \begin{tikzpicture}[ ->,>=stealth',shorten >=1pt,auto,node distance=2cm,
   thick,main node/.style={circle,fill=blue!20,draw,font=\sffamily\bfseries},]


    \node[main node] (1) {0};
   \node[main node] (2) [right of=1] {1};
   \node[main node] (3) [right of=2] {2};
   \node[main node] (5) [right of=3] {3};

   \path[every node/.style={font=\sffamily\small}]
     (1) 
        edge [right] node[left] {} (2)
     (2) 
       edge node {} (3)
     (5) edge node [right] {} (3)
     (3) 
        edge node {} (2)
        edge [right] node[right] {} (5);  
 \end{tikzpicture}
 \centering
 \caption{Communication graph.}
 \label{Leader}
 \end{figure}

%
%
%

%

\begin{figure}[htbp]
\centering
\includegraphics[width=.98\columnwidth]{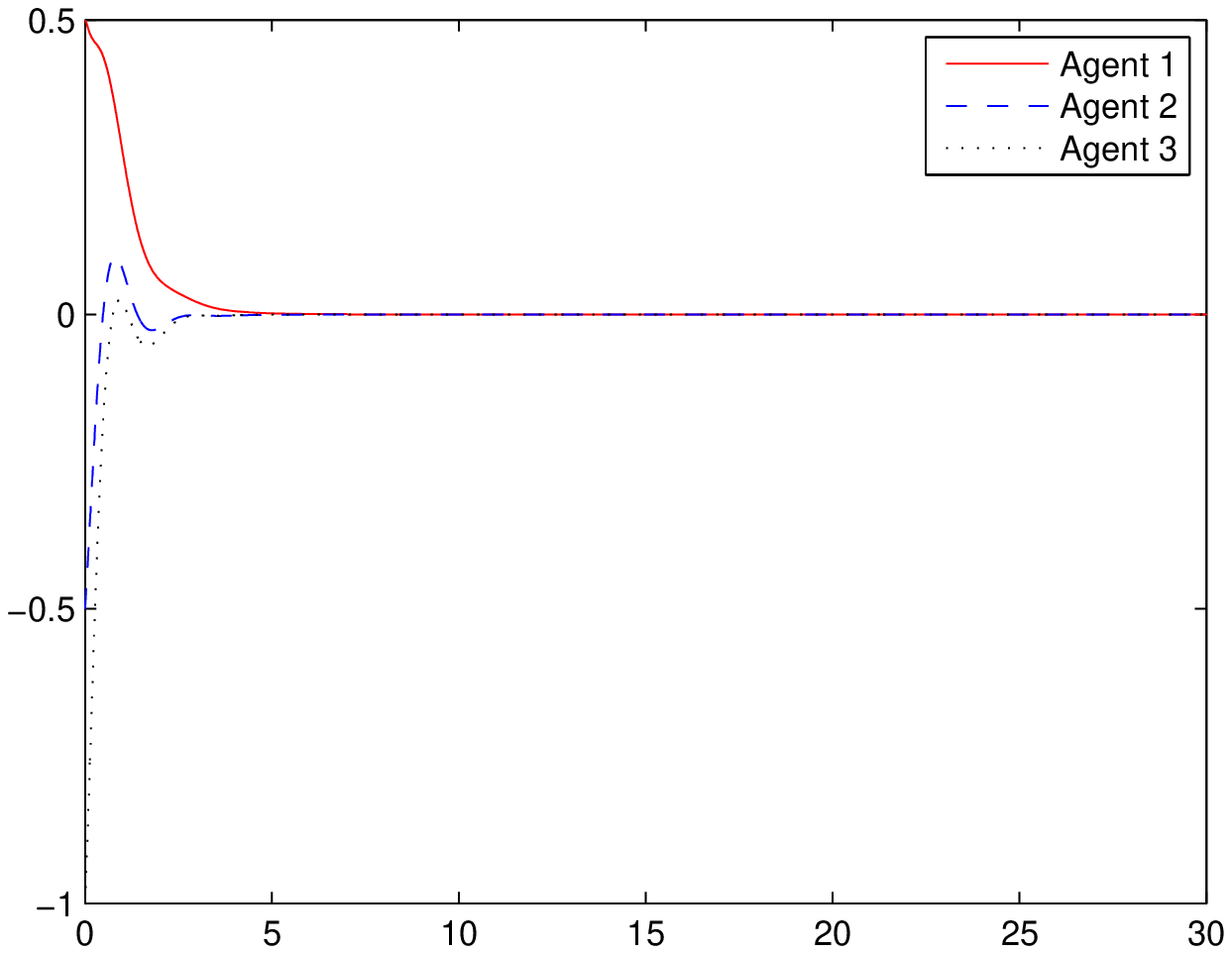}\\
\includegraphics[width=.98\columnwidth]{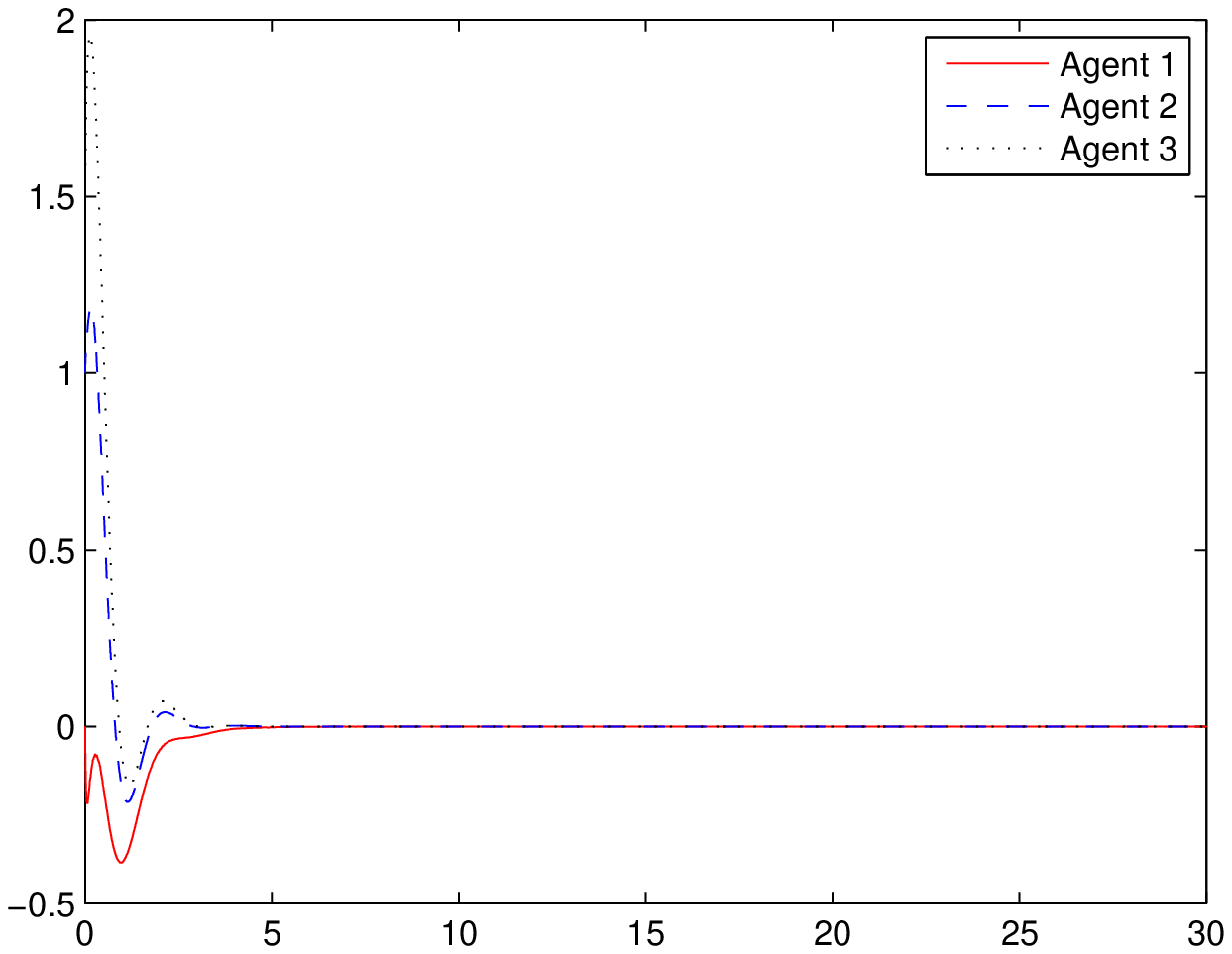}
\caption{Relative angles (the top figure) and relative velocities of the
  agents with
  respect to the leader, obtained using the algorithm based on
  Theorem~\ref{Theorem 1}.}
\label{Theorem1state1}
\end{figure}

\begin{figure}[htbp]
\centering
\includegraphics[width=.98\columnwidth]{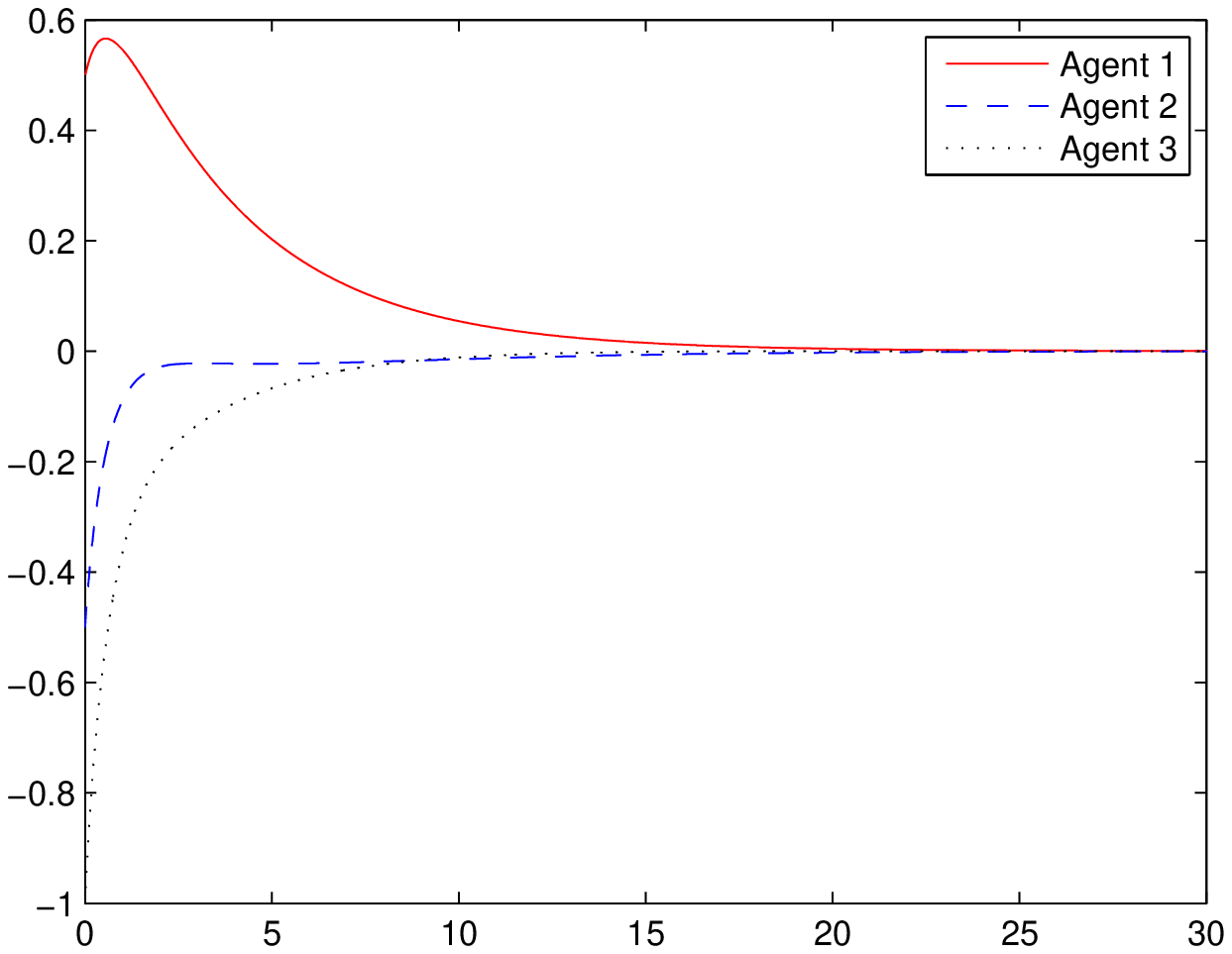}\\
\includegraphics[width=.98\columnwidth]{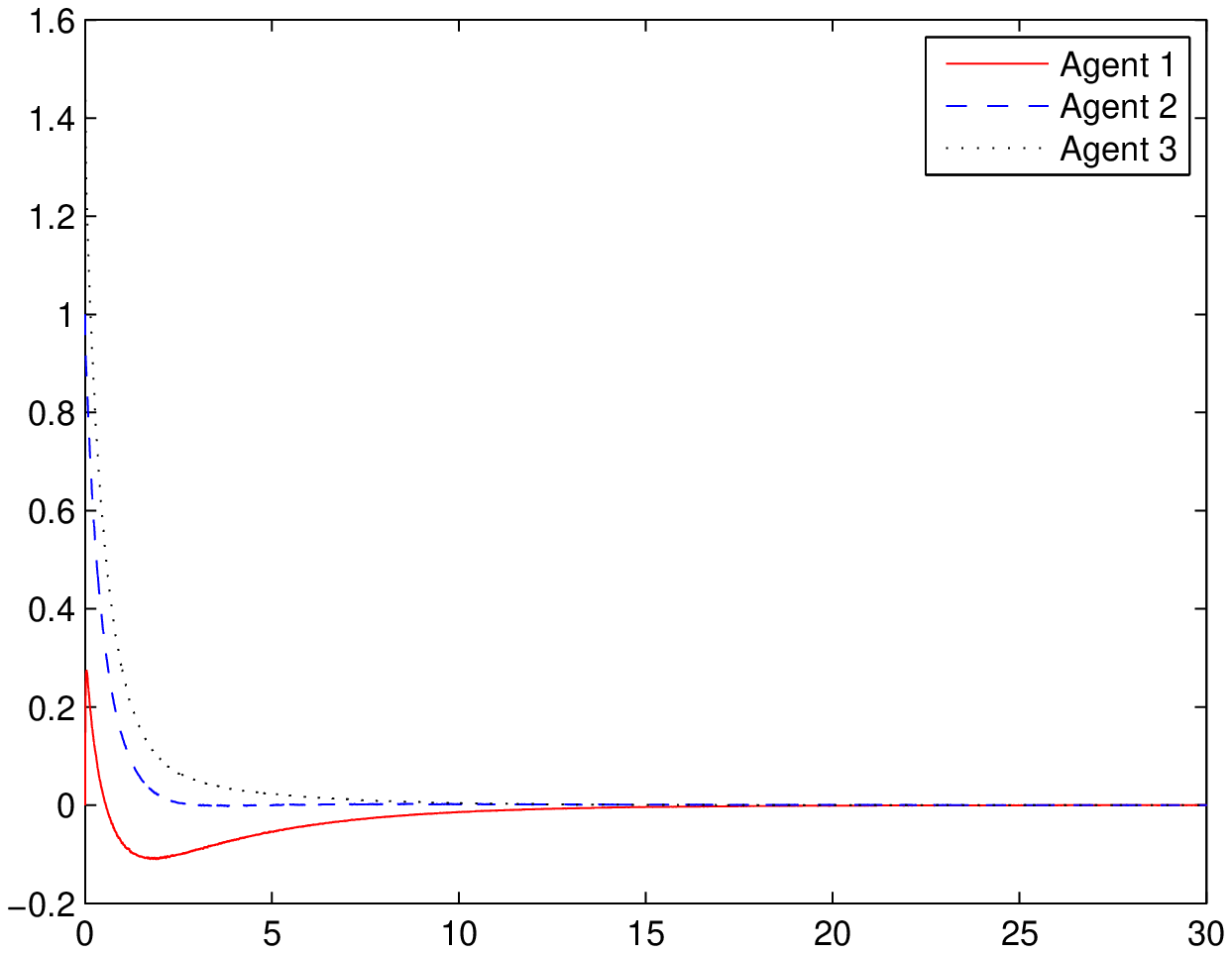}
\caption{Relative angles (the top figure) and relative velocities of the
  agents with respect to the leader, obtained using the algorithm based on
  Theorem~\ref{Theorem 2}.}
\label{Theorem2state1}
\end{figure}

\section{Conclusions}

The guaranteed consensus tracking performance problem for a coupled linear
system with undirected graph has been discussed in this paper. This
problem was transformed into a decentralized control problem for a system,
in which the interactions between subsystems satisfy IQCs. This has allowed
us to develop a procedure and sufficient conditions for the synthesis of
the tracking protocol for the original system.

According to the simulation results, the proposed computational algorithm
based on Theorem 1, which solves $N$ coupled LMIs, guarantees a better
performance. However, the proposed computational algorithm based on Theorem
2 only requires decoupled LMIs to be feasible, which enables the
controller gain to be computed in a decentralized manner.

\end{document}